\newtheorem{theorem}{Theorem}
\newtheorem{property}{Property}
\newcommand{\CDAWG}{\ensuremath{\mathsf{CDAWG}}}
\newcommand{\ST}{\ensuremath{\mathsf{ST}}}
\newcommand{\MS}{\ensuremath{\mathsf{MS}}}
\newcommand{\BWT}{\ensuremath{\mathsf{BWT}}}
\newcommand{\RLBWT}{\ensuremath{\mathsf{RLBWT}}}
\newcommand{\REV}[1]{\ensuremath{\overline{#1}}}
\newcommand{\SLT}{\ensuremath{\mathsf{SLT}}}
\newcommand{\DESCRIPTOR}{\ensuremath{\mathtt{id}}}
\newcommand{\ExtendRight}{\ensuremath{\mathtt{extendRight}}}
\newcommand{\ContractRight}{\ensuremath{\mathtt{contractRight}}}
\newcommand{\ExtendLeft}{\ensuremath{\mathtt{extendLeft}}}
\newcommand{\ContractLeft}{\ensuremath{\mathtt{contractLeft}}}
\title{Fully-functional bidirectional Burrows-Wheeler indexes}
\author{Djamal Belazzougui}
\address{CAPA, DTISI, Centre de Recherche sur l'Information Scientifique et Technique \\ Algiers, Algeria.}
\email{dbelazzougui@cerist.dz}
\author{Fabio Cunial}
\address{Max Planck Institute for Molecular Cell Biology and Genetics (MPI-CBG) \\ Center for Systems Biology Dresden (CSBD) \\ Dresden, Germany.}
\email{cunial@mpi-cbg.de}
\begin{document}

\maketitle

\begin{abstract}
Given a string $T$ on an alphabet of size $\sigma$, we describe a bidirectional Burrows-Wheeler index that takes $O(|T|\log{\sigma})$ bits of space, and that supports the addition \emph{and removal} of one character, on the left or right side of any substring of $T$, in constant time. Previously known data structures that used the same space allowed constant-time addition to any substring of $T$, but they could support removal only from specific substrings of $T$. We also describe an index that supports bidirectional addition and removal in $O(\log{\log{|T|}})$ time, and that takes a number of words proportional to the number of left and right extensions of the maximal repeats of $T$. We use such fully-functional indexes to implement bidirectional, frequency-aware, variable-order de Bruijn graphs with no upper bound on their order, and supporting natural criteria for increasing and decreasing the order during traversal.
\end{abstract}

\section{Introduction}

A \emph{bidirectional index} on a string $T$ is a data structure that represents any substring $W$ of $T$ as a constant-size descriptor that recapitulates the set of all starting positions of $W$ in $T$, and the set of all ending positions of $W$ in $T$. Such a representation allows extending $W$ with a character in both directions, enumerating the distinct characters that occur after $W$ in both directions, and switching direction during extension. All existing bidirectional indexes can be seen as updating positions in the suffix tree of $T$ and in the suffix tree of the reverse of $T$, either literally, as in the \emph{affix tree} \cite{maass2000linear,stoye2000affix}, or in compact representations, like the \emph{affix array} \cite{strothmann2007affix} and the \emph{bidirectional Burrows-Wheeler transform} (BWT) \cite{schnattinger2012bidirectional}. \emph{Synchronous} bidirectional indexes maintain a position in both trees at every extension step, whereas \emph{asynchronous} indexes maintain a position in just one tree, and compute the position in the other only when the user needs to change direction \cite{canovas2017full}. Applications of bidirectional indexes to bioinformatics, like read mapping with mismatches and searching for RNA secondary structures, have used until now the ability of bidirectional indexes to \emph{add} characters both to the left and to the right of a string (an operation called \emph{extension}: see e.g. \cite{gog2014multi,lam2009high,mauri2003pattern,russo2009approximate,schnattinger2012bidirectional,strothmann2007affix} for a small sampler), whereas \emph{removing} characters from the left and from the right (called \emph{contraction}) has only been conjectured to be useful \cite{belazzougui2016bidirectional,canovas2017full}, and it has been supported efficiently just for right-maximal and left-maximal substrings of $T$, respectively (defined in Section~\ref{sec:preliminaries}), or for strings that occur just once in $T$, for which the implementation is straightforward (see e.g. \cite{BCKM13,munro2017space}).
%

In this paper we describe a simple method for removing characters from the left or from the right of any substring of $T$, 
based just on the ability to measure the length of the \emph{maximal repeats} of $T$ (defined in Section~\ref{sec:preliminaries}). Using the recent observation that all such lengths can be represented in $O(|T|)$ bits of space \cite{belazzougui2017framework}, we show that bidirectional contraction can be supported in constant time with the bidirectional BWT index described in~\cite{BCKM13}, within the same space budget and without changing the complexity of its construction. Our contraction algorithm can also be implemented on top of an existing representation of the suffix tree, based on the \emph{Compact Directed Acyclic Word Graph} (CDAWG), that takes a number of words proportional just to the number of left and right extensions of the maximal repeats of $T$~\cite{belazzougui2017representing}: this yields an index that supports, in the same asymptotic space, bidirectional extension and contraction of any substring of $T$ in $O(\log{\log{|T|}})$ time.

Having both bidirectional extension and contraction enables several applications, among which a de Bruijn graph that stores the frequency of its $k$-mers, allows for bidirectional navigation, and supports any value of $k$, as well as increasing and decreasing the value of $k$, \emph{with no limit on the maximum $k$ allowed}. We call such a data structure an \emph{infinite-order de Bruijn graph}, and we describe an implementation that takes $O(|T|\log{\sigma})$ bits of space (where $\sigma$ is the size of the alphabet), and that supports all operations in constant time, as well as another implementation that takes a number of words proportional to the left an right extensions of the maximal repeats of $T$, and that supports all operations in $O(\log{\log{|T|}})$ time. The latter representation establishes a connection between de Bruijn graphs and CDAWGs that was not known before. Our query times are comparable to those of the variable-order, bidirectional representation described in \cite{belazzougui2016bidirectional}, which supports navigation and changing order in $O(\log{K})$ time (assuming constant~$\sigma$), but is frequency-oblivious and requires a maximum order $K$ to be specified during construction. This competitor has the advantage of taking just $O(m\log{K})$ bits of space, where $m$ is the number of distinct $K$-mers, 
and of allowing the user to specify by how much the order should be changed in each query (the changes in order supported by our index are detailed in Sections \ref{sec:contract} and \ref{sec:dbgCDAWG}). 
The variable-order representation described in \cite{diaz2018assembling} takes constant time (assuming constant $\sigma$) 
to implement changes in order that are similar to those supported by our index, and uses just $O(m)$ bits of space; 
however, it is unidirectional, frequency-oblivious, and it requires again a maximum $K$ to be known at construction time.

\begin{figure}[h!]
\centering
\includegraphics[width=0.9\textwidth]{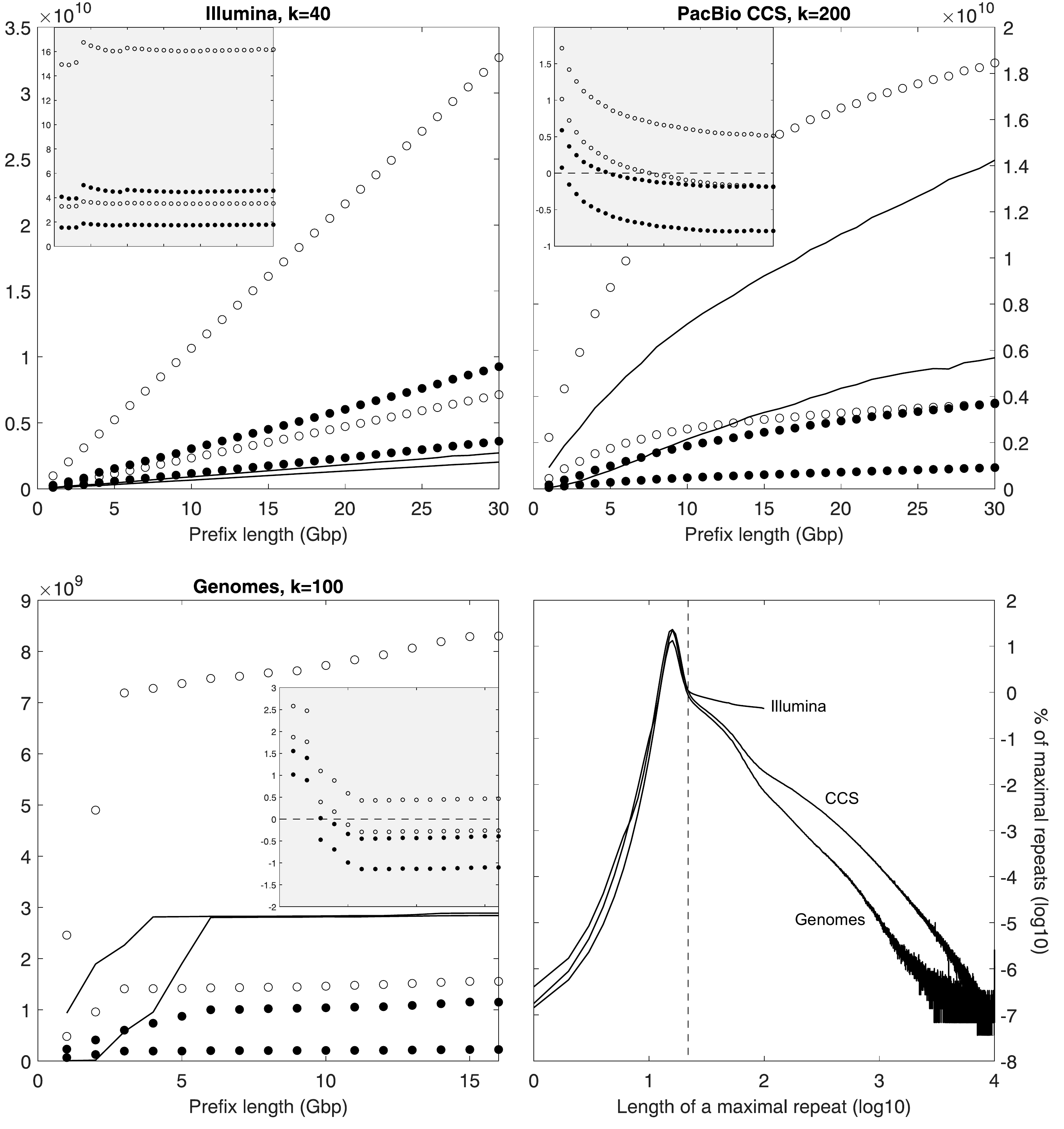}
\caption{
Number of $k$-mers and repeated $k$-mers (lines), maximal repeats and bidirectional extensions of maximal repeats (white circles), and maximal repeats of length at least 20 and the bidirectional extensions connecting them 
(black circles), for prefixes of a human read dataset produced with Illumina and PacBio CCS technologies. Bottom left: prefixes of the concatenation of 5 human genome assemblies. Bottom right: fraction of maximal repeats of each length in the three datasets (the vertical line is at length 20). Inserts show the number of maximal repeats and extensions, divided by the number of repeated $k$-mers (in $\log_{10}$ scale for CCS and genomes). Decreasing $k$ down to 20 (Illumina), 50 (CCS) and 25 (genomes) yields similar plots. $k$-mers are counted with KMC 3 \cite{kokot2017kmc}, and are considered distinct from their reverse complements. 
%
}
\label{fig:growth}
\end{figure}

We conjecture that a de Bruijn graph representation based on the CDAWG might be useful for assembling the recently introduced PacBio CCS reads, which have the same 2\% error rate as  Illumina short reads but an average length of 15 kilobases (see e.g. \cite{Wenger519025}). Such read sets contain long exact repeats, of length up to ten thousand, so it might be desirable to set $k$ to large values and to decrease it dynamically, down to a minimum value $\tau$. Moreover, most maximal repeats are short (Figure~\ref{fig:growth}, bottom right), and we can remove from the CDAWG all maximal repeats shorter than $\tau$, and all arcs adjacent to them, while still being able to represent all de Bruijn graphs of order at least $\tau$ (see Section \ref{sec:dbgCDAWG}). For practical values of $k$, the number of nodes and arcs in such a pruned CDAWG grows more slowly than the number of distinct $k$-mers (Figure~\ref{fig:growth}, top right; reads from the Genome in a Bottle consortium\footnote{\texttt{ftp://ftp-trace.ncbi.nlm.nih.gov/giab/ftp/data/AshkenazimTrio/HG002\_NA24385\_son/ \\ PacBio\_CCS\_15kb/}}), suggesting that our data structure might be competitive in space with the state of the art, whose size is proportional to the number of $k$-mers for a specific value of $k$. The same observation applies to repetitive datasets: for example, the de Bruijn graph of a set of individuals from the same species has applications in population genomics, and the de Bruijn graph of a set of genomes from related species is used in comparative genomics  \cite{minkin2019scalable,minkin2016twopaco}. In Figure~\ref{fig:growth}, bottom left, we experiment with the concatenation of assemblies \texttt{hg16}, \texttt{hg17}, \texttt{hg18}, \texttt{hg19} and \texttt{hg38} of the human genome from the UCSC Genome Browser\footnote{\texttt{http://hgdownload.soe.ucsc.edu/downloads.html\#human}} (a benchmark dataset from \cite{baier2015graphical,minkin2016twopaco}), and we observe exact repeats of length up to 489 million.
Our data structure might also be useful with noisy long reads after error correction. 
Even in short-read Illumina datasets, the number of maximal repeats and of their extensions after pruning is just a small multiple of the number of distinct $k$-mers (Figure~\ref{fig:growth}, top left; reads from the Illumina Platinum project\footnote{\texttt{https://www.ebi.ac.uk/ena/data/view/PRJEB3381}, run \texttt{ERR194146}, file \texttt{ERR194146\_1.fastq.gz}, read length 101.}).

Finally, recall that our de Bruijn graph representations allow access to the frequency of a node or arc: this might be useful for avoiding repetitive regions during assembly, or for reconstructing only those \cite{koch2014repark}, for assembling metagenomes with non-uniform sequencing depths \cite{li2016megahit}, or for inferring transcripts with different expression levels \cite{pandey2017debgr}.


\section{Preliminaries} \label{sec:preliminaries}

\subsection{Strings}

Let $\Sigma=[1..\sigma]$ be an integer alphabet, let $\#=0$ be a separator not in $\Sigma$, and let $T=[1..\sigma]^{n-1}$ be a string. We denote by $\REV{W}$ the reverse of a string $W$, i.e. string $W$ written from right to left, and we call $W$ a \emph{k}-mer iff $|W|=k$. We denote by $f_{T}(W)$ the number of (possibly overlapping) occurrences of a string $W$ in the circular version of $T$. A \emph{repeat} $W$ is a string that satisfies $f_{T}(W)>1$. We denote by $\Sigma^{\ell}_{T}(W)$ the set of \emph{left-extensions of $W$}, i.e. the set of characters $\{a \in [0..\sigma] : f_{T}(aW)>0\}$. Symmetrically, we denote by $\Sigma^{r}_{T}(W)$ the set of \emph{right-extensions of $W$}, i.e. the set of characters $\{b \in [0..\sigma] : f_{T}(Wb)>0\}$. A repeat $W$ is \emph{right-maximal} (respectively, \emph{left-maximal}) iff $|\Sigma^{r}_{T}(W)|>1$ (respectively, iff $|\Sigma^{\ell}_{T}(W)|>1$). It is well-known that $T$ can have at most $n-1$ right-maximal substrings and at most $n-1$ left-maximal substrings. A \emph{maximal repeat} of $T$ (called \emph{balanced substring} in \cite{strothmann2007affix}) is a repeat that is both left- and right-maximal.

The \emph{unidirectional de Bruijn graph} of order $k$ of $T$ is a directed graph $(V,E)$ whose node set $V$ is in one-to-one correspondence with the set of distinct $k$-mers that occur in $T$; there is an arc $(v,w) \in E$ for every distinct $(k+1)$-mer $W$ such that both $W[1..k]$ and $W[2..k+1]$ occur in $T$, and such arc is labelled with character $W[k+1]$. In some formulations, $E$ contains just those arcs that correspond to $(k+1)$-mers that occur in $T$: in this case, a $k$-mer is right-maximal (respectively, left-maximal) in $T$ iff its corresponding node in $V$ has at at least two outgoing (respectively, incoming) arcs. The \emph{bidirectional} de Bruijn graph is defined symmetrically.

We denote by $\ST_T$ the \emph{suffix tree} of $T\#$, and by $\REV{\ST}_{T}$ the suffix tree of $\REV{T}\#$. We assume the reader to be already familiar with the basics of suffix trees, including suffix links, which we do not further describe here. We denote by $\ell(v)$ the label of a node $v$ of a suffix tree, and we say that $v$ is the \emph{locus} of all substrings $W[1..k]$ of $T$ where $|\ell(u)| < k \leq |\ell(v)|$, $u$ is the parent of $v$, and $W=\ell(v)$. It is well-known that a substring $W$ of $T$ is right-maximal (respectively, left-maximal) iff $W=\ell(v)$ for some internal node $v$ of $\ST_T$ (respectively, for some internal node $v$ of $\REV{\ST}_T$). Suffix links and internal nodes of $\ST_T$ form a tree, called the \emph{suffix-link tree} of $T$ and denoted by $\SLT_T$, and inverting the direction of all suffix links yields the so-called \emph{explicit Weiner links}. Given an internal node $v$ and a character $a \in [0..\sigma]$, it might happen that string $a\ell(v)$ occurs in $T$ but is not right-maximal, i.e. it is not the label of any internal node of $\ST_T$: all such left extensions of internal nodes that end in the middle of an edge are called \emph{implicit Weiner links}. An internal node $v$ of $\ST_T$ can have more than one outgoing Weiner link, and all such Weiner links have distinct labels: in this case, $\ell(v)$ is a maximal repeat, as well as the label of a node in $\REV{\ST}_T$. Maximal repeats and implicit Weiner links are related by the following simple property, which was already hinted at in \cite{apostolico2000optimal}:

\begin{property} \label{property:maxRepeats}
Let $v$ be an internal node of $\ST_T$. If there is an implicit Weiner link from $v$, then $\ell(v)$ is a maximal repeat of $T$.
\end{property}

It is known that the number of suffix links (or, equivalently, of explicit Weiner links) is upper-bounded by $2n-2$, and that the number of implicit Weiner links can be upper-bounded by $2n-2$ as well. We call $\SLT^{*}_{T}$ a version of $\SLT_T$ augmented with implicit Weiner links and with nodes corresponding to their destinations. We say that a maximal repeat $W$ of $T$ is \emph{rightmost} if no string $WV$ with $V \in [0..\sigma]^+$ is left-maximal in $T$. Symmetrically, we say that a maximal repeat $W$ of $T$ is \emph{leftmost} if no string $VW$ with $V \in [0..\sigma]^+$ is right-maximal in $T$. Since left-maximality is closed under prefix operation, it is easy to see that the maximal repeats of $T$ are all and only the nodes of $\ST_T$ that lie on paths that start from the root and that end at nodes labelled by rightmost maximal repeats. We call this the \emph{maximal repeat subgraph of $\ST_T$} (Figure \ref{fig:contract}b). Clearly the maximal repeats of $T$ coincide with the branching nodes of $\REV{\SLT}^{*}_T$ (Figure \ref{fig:contract}a), and the rightmost maximal repeats of $T$ coincide with the leaves of $\REV{\SLT}_T$. Thus, it is easy to see that $\REV{\SLT}_T$ (a trie) is a subdivision of the maximal repeat subgraph of $\ST_T$ (a compact trie), and that the nodes in the unary paths of $\REV{\SLT}_T$ are in one-to-one correspondence with the internal nodes of $\REV{\ST}_T$ that are not maximal repeats (see Figures \ref{fig:contract}a and \ref{fig:contract}b for an example, and see Section 2.1 in \cite{belazzougui2017framework} for an extended explanation). 
The following property is thus immediate (and symmetrical notions hold for $\REV{\ST}_T$, $\SLT^{*}_T$, and leftmost maximal repeats):

\begin{property} \label{property:affixLinks}
Let $v$ be an internal node of $\ST_T$. The locus $w$ of $\REV{\ell(v)}$ in $\REV{\ST}_T$ is such that $\ell(w)$ is the reverse of a maximal repeat of $T$.
\end{property}

The \emph{compact directed acyclic word graph} of a string $T$ (denoted by $\CDAWG_T$ in what follows) is the minimal compact automaton that recognizes the suffixes of $T$ \cite{blumer1987complete,CrochemoreV97}. We denote by $\REV{\CDAWG}_T$ the CDAWG of the reverse of $T$, by $e_T$ the number of arcs in $\CDAWG_T$, and by $\REV{e}_T$ the number of arcs in $\REV{\CDAWG}_T$. The CDAWG of $T$ can be seen as the minimization of $\ST_T$, in which all leaves are merged to the same node (the sink, that represents $T$ itself), and in which all nodes except the sink are in one-to-one correspondence with the maximal repeats of $T$ \cite{Raffinot2001}. Every arc of $\CDAWG_T$ is labeled by a substring of $T$, and the out-neighbors $w_1,\dots,w_k$ of every node $v$ of $\CDAWG_T$ are sorted according to the lexicographic order of the distinct labels of arcs $(v,w_1),\dots,(v,w_k)$. Since there is a bijection between the nodes of $\CDAWG_T$ and the maximal repeats of $T$, the node $v'$ of $\CDAWG_T$ with $\ell(v')=W$ is the equivalence class of the nodes $\{v_1,\dots,v_k\}$ of $\ST_T$ such that $\ell(v_i)=W[i..|W|]$ for all $i \in [1..k]$, and such that $v_k,v_{k-1},\dots,v_1$ is a maximal unary path of explicit Weiner links. The subtrees of $\ST_T$ rooted at all such nodes are isomorphic. It follows that a right-maximal string can be identified by the maximal repeat $W$ it belongs to, and by the length of the corresponding suffix of $W$ (see \cite{belazzougui2017representing} for an extended explanation). 

We assume the reader to be familiar with the Burrows-Wheeler transform of $T$, which we denote by $\BWT_T$ (we use $\REV{\BWT}_T$ to denote the BWT of the reverse of $T$) and we don't further describe here. We say that $\BWT_{T}[i..j]$ is a \emph{run} iff: (1) $\BWT_{T}[k]=c \in [0..\sigma]$ for all $k \in [i..j]$; (2) every substring $\BWT_{T}[i'..j']$ such that $i' \leq i$, $j' \geq j$, and $[i'..j'] \neq [i..j]$, contains at least two distinct characters. We denote by $\mathcal{R}_{T}$ the set of all triplets $(c,i,j)$ such that $\BWT_{T}[i..j]$ is a run of character $c$, and we use $\REV{\mathcal{R}}_{T}$ to denote the set of runs of $\REV{\BWT}_T$. It is known that $|\mathcal{R}_T|$ is at most equal to the number of arcs in $\CDAWG_T$ \cite{belazzougui2015composite}.

Given a second string $S \in [1..\sigma]^+$, the \emph{matching statistics array} $\MS_{S,T}$ of $S$ with respect to $T$ is an array of length $|S|$ such that $\MS_{S,T}[i]$ is the largest $j$ such that $S[i..i+j-1]$ occurs in $T$.

In the rest of the paper we drop subscripts whenever they are clear from the context.

\begin{figure}[t]
\centering
\includegraphics[width=1\textwidth]{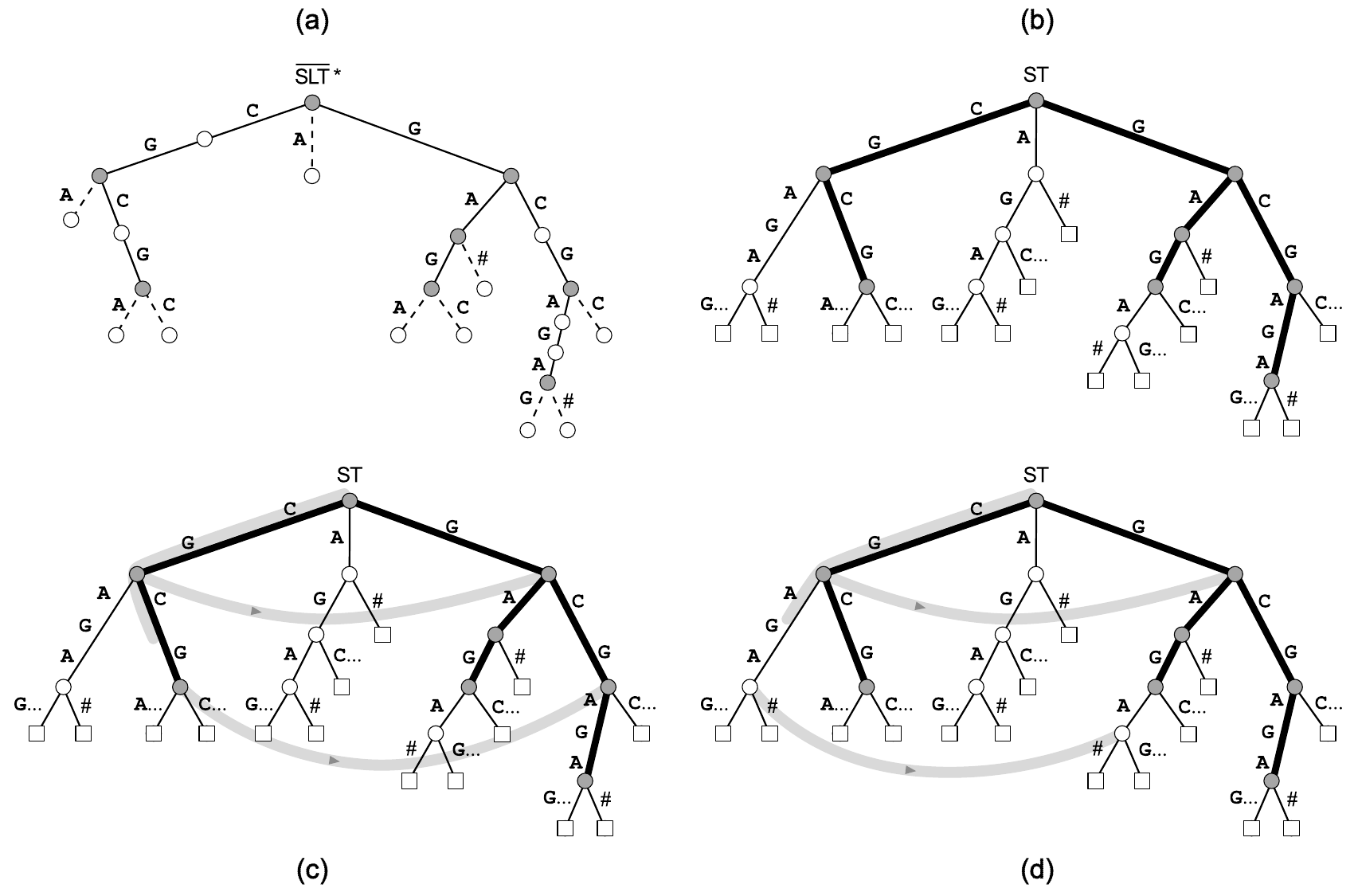}
\caption{
Left-contraction of a substrings that is not right-maximal. (a) The extended suffix-link tree $\REV{\SLT}^{*}_T$ of string $T=\mathtt{CGCGCGAGAGCGAGA\#}$. Nodes that correspond to maximal repeats are highlighted in grey. Implicit Weiner links are dashed. (b) $\ST_T$ (thin lines) with $\REV{\SLT}_T$ overlaid (thick lines). Nodes that correspond to maximal repeats are in grey. Labels of edges to leaves are shortened. (c) Left-contraction of substring $aW=\mathtt{CGC}$. The edge to which $aW$ belongs is projected to another edge by suffix links (thick grey lines). (d) Left-contraction of substring $aW=\mathtt{CGA}$. The edge to which $aW$ belongs is projected to a path by suffix links.
}
\label{fig:contract}
\end{figure}

\subsection{String indexes} \label{sec:stringIndexes}

A \emph{bidirectional index} is a data structure that, given a constant-space descriptor $\mathtt{id}(W)$ of a substring $W$ of $T$, supports the following operations: $\mathtt{extendRight}(\mathtt{id}(W),a)=\mathtt{id}(Wa)$ if $f(Wa)>0$, or an error otherwise; $\mathtt{enumerateRight}(\mathtt{id}(W))=\{\mathtt{id}(Wa) : a \in \Sigma, f(Wa)>0 \}$; $\mathtt{isRightMaximal}(\mathtt{id}(W))=\mathtt{true}$ iff $|\mathtt{enumerateRight}(\mathtt{id}(W))|>1$. Operations $\mathtt{extendLeft}$, $\mathtt{enumerateLeft}$ and $\mathtt{isLeftMaximal}$ are defined symmetrically. We consider bidirectional indexes based on the BWT: specifically, we denote with $\mathbb{I}(W,T)$ the function that maps a substring $W$ of $T$ to the interval of $W$ in $\BWT$, i.e. to the interval of all suffixes of $T\#$ that start with $W$, and we use $\DESCRIPTOR(W)=(\mathbb{I}(W,T),\mathbb{I}(\REV{W},\REV{T}),|W|)$ as a constant-space descriptor of $W$. A number of bidirectional BWT indexes have been described in the literature; in this paper we are just interested in the data structure from \cite{BCKM13}, which supports all operations in linear time in the size of their output, takes $O(|T|\log{\sigma})$ bits of space, and can be built in randomized $O(|T|)$ time and $O(|T|\log{\sigma})$ bits of working space.

Given a string $T \in [1..\sigma]^{n-1}\#$, we call \emph{run-length encoded BWT} ($\RLBWT_T$) any representation of $\BWT_T$ that takes $O(|\mathcal{R}_T|)$ words of space and supports the well-known rank and select operations (see e.g. \cite{makinen2005succinct1,MakinenNSV10,SirenVMN08}). It is easy to implement a version of $\RLBWT_T$ that supports rank and select in $O(\log{\log{n}})$ time~\cite{belazzougui2015composite}. In this paper we use the representation of the suffix tree based on the CDAWG described in \cite{belazzougui2017representing}, which takes just $O(e+\REV{e})$ words of space by augmenting $\CDAWG$ and $\REV{\CDAWG}$ with the RLBWT of $T$ and $\REV{T}$. Such a data structure describes a node $v$ of $\ST$ as a tuple $\mathtt{id}(v)=(v',|\ell(v)|,i,j)$, where $v'$ is the node in $\CDAWG$ that corresponds to the equivalence class of $v$, and $[i..j]$ is the interval of $\ell(v)$ in $\BWT$. For every node $v$ of $\CDAWG$, the index stores, among other things: $|\ell(v)|$ in a variable $v\mathtt{.length}$; the number $v\mathtt{.size}$ of right-maximal strings that belong to its equivalence class; and the interval $[v\mathtt{.first}..v\mathtt{.last}]$ of $\ell(v)$ in $\BWT_T$. For every arc $\gamma=(v,w)$ of $\CDAWG$, the index stores the first character of $\ell(\gamma)$ in a variable $\gamma\mathtt{.char}$, and the number of characters of the right extension implied by $\gamma$ in a variable $\gamma\mathtt{.right}$. Finally, we add to the CDAWG all arcs $(v,w,c)$ such that $w$ is the equivalence class of the destination of a Weiner link from $v$ labeled by character $c$ in $\ST_T$, as well as the reverse of all explicit Weiner link arcs. See~\cite{belazzougui2017representing} for an extended description of the data structure and of the complexity of its operations. Here we just mention that the index supports operations $\mathtt{stringDepth}(\mathtt{id}(v))$ and $\mathtt{child}(\mathtt{id}(v),c)$ in constant time, and $\mathtt{parent}(\mathtt{id}(v))$, $\mathtt{suffixLink}(\mathtt{id}(v))$, $\mathtt{weinerLink}(\mathtt{id}(v),c)$ in $O(\log{\log{|T|}})$ time. 

In this paper we need to store the topology of $\REV{\SLT}$ and the topology of $\ST$ efficiently. It is well-known that the topology of an ordered tree of $n$ nodes can be represented using $2n+o(n)$ bits, as a sequence of $2n$ balanced parentheses \cite{munro2001succinct}. Let $\mathtt{id}(v)$ be the rank of a node $v$ in the preorder traversal of the tree. Given the balanced parentheses representation of the tree encoded in $2n+o(n)$ bits, it is also well-known that one can build a data structure that takes $2n+o(n)$ bits, and that supports several common operations in constant time \cite{navarro2016compact,NStalg14,SNsoda10}, among which: $\mathtt{parent}(\mathtt{id}(v))$, which returns $\mathtt{id}(u)$, where $u$ is the parent of $v$, or an error if $v$ is the root; $\mathtt{lca}(\mathtt{id}(v),\mathtt{id}(w))$, which returns $\mathtt{id}(u)$, where $u$ is the lowest common ancestor of nodes $v$ and $w$; $\mathtt{leftmostLeaf}(\mathtt{id}(v))$ and $\mathtt{rightmostLeaf}(\mathtt{id}(v))$, which return one plus the number of leaves that, in the preorder traversal of the tree, are visited before the first (respectively, the last) leaf that belongs to the subtree rooted at $v$; 
$\mathtt{depth}(\mathtt{id}(v))$, which returns the distance of $v$ from the root. This data structure can be built in $O(n)$ time and in $O(n)$ bits of working space. Moreover, given a node $v$ and a length $d$, a \emph{level-ancestor query} asks for the ancestor $u$ of $v$ such that the path from the root to $u$ contains exactly $d$ nodes. The level ancestor data structure described in \cite{bender2004level,berkman1994finding} takes $O(n)$ words of space and answers queries in constant time. Assuming that some nodes of the tree are marked, a \emph{lowest marked ancestor} data structure allows one to move in constant time from any node, to its lowest ancestor that is marked \cite{sokol2013engineering}.

We use the tree data structures described above to store the topology of $\ST$ and of $\REV{\SLT}$. Moreover, we mark in two bitvectors the nodes of $\REV{\SLT}$ and of $\ST$ that are maximal repeats (in preorder), and we index such bitvectors to support constant-time rank and select queries. Since $\REV{\SLT}$ is a subdivision of the subgraph of $\ST$ induced by maximal repeats, the $i$-th one in the two bitvectors correspond to the same maximal repeat. Thus, if node $v$ is a maximal repeat, and if we know its preorder position in $\ST$, we can compute the length of $\ell(v)$ by moving to the corresponding node $v'$ in $\REV{\SLT}$ and by computing the depth of $v'$ in the topology of $\REV{\SLT}$ (see \cite{belazzougui2017framework} for an extended explanation).

The rest of the paper focuses on representations of variable-order, bidirectional de Bruijn graphs that support the following primitives (for brevity we list here just operations in one direction). Let $k$ be the current order of the de Bruijn graph. Operation $v=\mathtt{node}(W)$, called \emph{membership}, 
returns the identifier of the node associated with $k$-mer $W$, or an error if $W$ does not occur in $T$. Operation $C=\mathtt{arcLabels}(v)$ returns the set of characters $C$ that label all arcs from node $v$ in the right direction, and operation $\mathtt{degree}(v)$ returns the number of such arcs. Query $e=\mathtt{arc}(v,c)$ returns the identifier of the arc that corresponds to string $\ell(v) \cdot c$, if any, where $v$ is a node in the current de Bruijn graph, $\ell(v)$ is the $k$-mer that corresponds to node $v$, and $c$ is a character; it returns an error if no such arc exists. Operation $w=\mathtt{followArc}(v,c)$ is similar, but returns the identifier of the node $w$ reached by the arc, if any.  Queries $\mathtt{freq}(v)$ and $\mathtt{freq}(e)$ return the number of occurrences of the $k$-mer associated with node $v$ and of the $(k+1)$-mer associated with arc $e$ (the number of occurrences of an arc might be zero). Representations that support such queries are called \emph{frequency-aware} or \emph{weighted} (see e.g. \cite{pandey2017debgr}). Operation $v'=\mathtt{increaseK}(v,c)$ for $c \in [0..\sigma]$ returns the node $v'$ associated with string $\ell(v) \cdot c$ in the de Bruijn graph of order $k+1$, if any, or an error otherwise. Operation $v'=\mathtt{decreaseK}(v)$ returns the node $v'$ associated with the prefix of length $k-1$ of $\ell(v)$ in the de Bruijn graph of order $k-1$.

In addition to increasing and decreasing the order by one unit, some variable-order representations allow the user to specify the desired amount of change \cite{belazzougui2016bidirectional,boucher2015variable}. 
In the rest of the paper we argue that it is more natural to change the order based on the frequency or on the extensions of $k$-mers, as proposed in \cite{diaz2018assembling}. Specifically, given a node $v$ of the current de Bruijn graph, let $\ell(v) \cdot W$, $W \in \Sigma^*$, be the longest string with the same frequency as $\ell(v)$ in $T$. Operation $(v',k')=\mathtt{increaseK}(v)$ returns the node $v'$ associated with $\ell(v) \cdot W$ in the de Bruijn graph of order $k+|W|$, and sets $k'$ to the new order $k+|W|$. Given a node $v$ of the current de Bruijn graph, let $W$ be the longest prefix of $\ell(v)$ that has a different frequency from $\ell(v)$ in $T$. Operation $(v',k')=\mathtt{decreaseK}(v)$ returns the node $v'$ associated with $W$ in the de Bruijn graph of order $|W|$, and sets $k'$ to $|W|$. Alternatively, one might want $W$ to be the longest prefix of $\ell(v)$ such that the left-extensions of $W$ are a superset of the left-extensions of $\ell(v)$. A de Bruijn graph that supports such operations without returning the value of the new order is called \emph{hidden-order}~\cite{diaz2018assembling}.

\subsection{String indexes} \label{sec:stringIndexes}

A \emph{bidirectional index} is a data structure that, given a constant-space descriptor $\mathtt{id}(W)$ of a substring $W$ of $T$, supports the following operations: $\mathtt{extendRight}(\mathtt{id}(W),a)=\mathtt{id}(Wa)$ if $f(Wa)>0$, or error otherwise; $\mathtt{enumerateRight}(\mathtt{id}(W))=\{\mathtt{id}(Wa) : a \in \Sigma, f(Wa)>0 \}$; $\mathtt{isRightMaximal}(\mathtt{id}(W))=\mathtt{true}$ iff $|\mathtt{enumerateRight}(\mathtt{id}(W))|>1$. Operations $\mathtt{extendLeft}$, $\mathtt{enumerateLeft}$ and $\mathtt{isLeftMaximal}$ are defined symmetrically. Here we consider bidirectional indexes based on the BWT: specifically, we denote with $\mathbb{I}(W,T)$ the function that maps a substring $W$ of $T$ to the interval of $W$ in $\BWT$, i.e. to the interval of all suffixes of $T\#$ that start with $W$, and we use $\mathtt{id}(W)=(\mathbb{I}(W,T),\mathbb{I}(\REV{W},\REV{T}),|W|)$ as a constant-space descriptor of a substring $W$. A number of bidirectional BWT indexes have been described in the literature: here we are interested just in the data structure described in \cite{BCKM13}, which supports all operations in linear time in the size of their output, takes $O(|T|\log{\sigma})$ bits of space, and can be built in randomized $O(|T|)$ time and $O(|T|\log{\sigma})$ bits of working space. See \cite{BCKM13} for more details.

Given a string $T \in [1..\sigma]^{n-1}\#$, we call \emph{run-length encoded BWT} ($\RLBWT_T$) any representation of $\BWT_T$ that takes $O(|\mathcal{R}_T|)$ words of space, and that supports the well known rank and select operations: see for example \cite{makinen2005succinct1,MakinenNSV10,SirenVMN08}. It is easy to implement a version of $\RLBWT_T$ that supports rank in $O(\log{\log{n}})$ time and select in $O(\log{\log{n}})$ time~\cite{belazzougui2015composite}. In this paper we use the representation of $\ST$ based on $\CDAWG$ described in \cite{belazzougui2017representing}, which takes just $O(e+\REV{e})$ words of space by augmenting $\CDAWG$ and $\REV{\CDAWG}$ with the RLBWT of $T$ and of $\REV{T}$. Such data structure represents a node $v$ of $\ST$ as a tuple $\mathtt{id}(v)=(v',|\ell(v)|,i,j)$, where $v'$ is the node in $\CDAWG$ that corresponds to the equivalence class of $v$, and $[i..j]$ is the interval of $\ell(v)$ in $\BWT$. For every node $v$ of $\CDAWG$, the index stores, among other things: $|\ell(v)|$ in a variable $v\mathtt{.length}$; the number $v\mathtt{.size}$ of right-maximal strings that belong to its equivalence class; and the interval $[v\mathtt{.first}..v\mathtt{.last}]$ of $\ell(v)$ in $\BWT_T$. For every arc $\gamma=(v,w)$ of $\CDAWG$, the index stores the first character of $\ell(\gamma)$ in a variable $\gamma\mathtt{.char}$, and the number of characters of the right extension implied by $\gamma$ in a variable $\gamma\mathtt{.right}$. Finally, we add to the CDAWG all arcs $(v,w,c)$ such that $w$ is the equivalence class of the destination of a Weiner link from $v$ labeled by character $c$ in $\ST_T$, and the reverse of all explicit Weiner link arcs. See \cite{belazzougui2017representing} for a full description of the data structure and of the complexity of its operations. Here we just mention that the index supports operations $\mathtt{stringDepth}(\mathtt{id}(v))$ and $\mathtt{child}(\mathtt{id}(v),c)$ in constant time, and $\mathtt{parent}(\mathtt{id}(v))$, $\mathtt{suffixLink}(\mathtt{id}(v))$, $\mathtt{weinerLink}(\mathtt{id}(v))$ in $O(\log{\log{|T|}})$ time. It also allows reading the character at position $i$ of $T$ in $O(\log{|T|})$ time.

Finally, in this paper we need to store the topology of $\REV{\SLT}$ and the topology of $\ST$ efficiently. It is well known that the topology of an ordered tree of $n$ nodes can be represented using $2n+o(n)$ bits, as a sequence of $2n$ balanced parentheses built by opening a parenthesis, by recurring on every child of the current node in order, and by closing a parenthesis~\cite{munro2001succinct}. Let $\mathtt{id}(v)$ be the rank of a node $v$ in the preorder traversal of the tree. Given the balanced parentheses representation of the tree encoded in $2n+o(n)$ bits, it is also well known that one can build a data structure that takes $2n+o(n)$ bits, and that supports several common operations in constant time \cite{navarro2016compact,SNsoda10,NStalg14}, among which: $\mathtt{parent}(\mathtt{id}(v))$, which returns $\mathtt{id}(u)$, where $u$ is the parent of $v$, or an error if $v$ is the root; $\mathtt{lca}(\mathtt{id}(v),\mathtt{id}(w))$, which returns $\mathtt{id}(u)$, where $u$ is the lowest common ancestor of nodes $v$ and $w$; $\mathtt{leftmostLeaf}(\mathtt{id}(v))$ and $\mathtt{rightmostLeaf}(\mathtt{id}(v))$, which return one plus the number of leaves that, in the preorder traversal of the tree, are visited before the first (respectively, the last) leaf that belongs to the subtree rooted at $v$; $\mathtt{selectLeaf}(i)$, which returns $\mathtt{id}(v)$, where $v$ is the $i$-th leaf in preorder; $\mathtt{depth}(\mathtt{id}(v))$, which returns the distance of $v$ from the root. This data structure can be built in $O(n)$ time and in $O(n)$ bits of working space. Moreover, given a node $v$ and a length $d$, a \emph{level-ancestor query} asks for the ancestor $u$ of $v$ such that the path from the root to $u$ contains exactly $d$ nodes. The level ancestor data structure described in \cite{bender2004level,berkman1994finding} takes $O(n)$ words of space and it answers queries in constant time. Assuming that some nodes of the tree are marked, a \emph{lowest marked ancestor} data structure \cite{sokol2013engineering} allows one to move in constant time from any node, to its lowest ancestor that is marked.

We use the tree data structures described above to store the topology of $\ST$ and of $\REV{\SLT}$. Moreover, we mark in a bitvector the nodes of $\REV{\SLT}$ and of $\ST$ that are maximal repeats (in preorder), and we index such bitvectors to support constant-time rank and select queries. Since $\REV{\SLT}$ is a subdivision of the subgraph of $\ST$ induced by maximal repeats, the $i$-th one in the two bitvectors correspond to the same maximal repeat. Thus, if node $v$ is a maximal repeat and if we know its position in preorder in $\ST$, it is easy to see that we can compute the length of $\ell(v)$ by going to the node $v'$ in $\REV{\SLT}$ and by computing the depth of $v'$ in the topology of $\REV{\SLT}$: see \cite{belazzougui2017framework} for a more thorough explanation.

\section{Contracting in constant time} \label{sec:contract}

As mentioned, existing bidirectional BWT indexes support left-contraction just from right-maximal substrings (and symmetrically, they support right-contraction just from left-maximal substrings). Specifically, if the substring $aW$ is right-maximal and labels a node $v$ of $\ST$, then $\mathbb{I}(W,T)$ is the interval of node $\mathtt{suffixLink}(v)$ in $\ST$, and since we are removing one character from the right of $\REV{aW}$, the locus of $\REV{W}$ in $\REV{\ST}$ is either the same as the locus $w$ of $\REV{aW}$, or it is $\mathtt{parent}(w)$, whichever has the same frequency as $\mathbb{I}(W,T)$ \cite{BCKM13,munro2017space}.

To support left-contraction from a substring that is not right-maximal, it is enough to have access to the topology of $\REV{\SLT}$:

\begin{theorem} \label{thm:contract}
Let $T$ be a string on alphabet $\Sigma$. There is a data structure that supports operations $\ExtendRight$, $\ExtendLeft$, $\ContractRight$ and $\ContractLeft$ in constant time and in $O(n\log\sigma)$ bits of space. Such a data structure can be built in randomized $O(n)$ time and $O(n\log\sigma)$ bits of working space.
\end{theorem}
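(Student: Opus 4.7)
The plan is to augment the bidirectional BWT index of~\cite{BCKM13} --- which already supports $\ExtendLeft$, $\ExtendRight$, and the two ``easy'' contractions ($\ContractLeft$ from a right-maximal substring and $\ContractRight$ from a left-maximal substring) in constant time and $O(n\log\sigma)$ bits --- with the succinct topologies of $\ST$, $\REV{\ST}$, $\SLT$, and $\REV{\SLT}$, together with bitvectors marking, in preorder, the nodes of each topology that correspond to maximal repeats of $T$. Because $\REV{\SLT}$ is a subdivision of the maximal repeat subgraph of $\ST$ (symmetrically for $\SLT$ and $\REV{\ST}$), rank/select on these bitvector pairs yields a constant-time bijection between the $\ST$-loci of maximal repeats and the branching nodes of $\REV{\SLT}$. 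Each topology takes $2n+o(n)$ bits and supports parent, LCA, depth, level-ancestor, and lowest-marked-ancestor in constant time (Section~\ref{sec:stringIndexes}), so the total space stays $O(n\log\sigma)$ bits and construction is in randomized $O(n)$ time via~\cite{BCKM13,belazzougui2017framework}.

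Extensions and the easy contractions are inherited directly. The novel case is $\ContractLeft(\DESCRIPTOR(aW))$ with $aW$ not right-maximal (the symmetric $\ContractRight$ case uses the swapped structures). The new descriptor has two intervals to produce. The interval $\INTERVALFUNCTION(\REV{W},\REV{T})$ is easy: from $\INTERVALFUNCTION(\REV{aW},\REV{T})$, an LCA on $\REV{\ST}$ gives the locus $w$ of $\REV{aW}$, and since shrinking $|\REV{aW}|$ by one cannot cross more than one $\REV{\ST}$-edge, the locus of $\REV{W}$ is either $w$ itself or its $\REV{\ST}$-parent (decided by a single depth check); I then read off the leaf range. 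For $\INTERVALFUNCTION(W,T)$, I first compute the $\ST$-locus $v$ of $aW$ via LCA on $\ST$; since $aW$ is non-right-maximal, $aW$ lies strictly inside the edge $(\PARENT(v),v)$, so $\ell(v)$ is right-maximal and $v' := \SL(v)$ is the $\ST$-locus of $\ell(v)[2..]$. The $\ST$-locus $v''$ of $W$ is the lowest ancestor of $v'$ at string depth $\geq |aW|-1$, and I recover it in constant time by exploiting that $\REV{\SLT}$ is a trie whose topological depth equals string length and whose ancestors of any node represent prefixes of its $T$-label: using a lowest-marked-ancestor query I identify the lowest maximal-repeat ancestor of $v'$ in $\ST$, map it through the rank/select bijection to a branching node of $\REV{\SLT}$, issue a level-ancestor query at depth $|aW|-1$, and translate the answer back via the same bijection to obtain $v''$ and its leaf range in $\ST$.

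The main obstacle will be making precise the ``projection'' illustrated in Figure~\ref{fig:contract}(c,d). As the string depth varies along the $\ST$-edge containing $aW$, the sequence of suffix-link images traces either a single $\ST$-edge (panel c) or a longer path inside the maximal repeat subgraph of $\ST$ (panel d), depending on which intermediate strings are themselves maximal repeats; Property~\ref{property:maxRepeats} and Property~\ref{property:affixLinks} together force every such intermediate string to be represented in $\REV{\SLT}$, either as a branching node (if it is a maximal repeat) or as a subdivision node on the corresponding unary path, so the level-ancestor query at depth $|aW|-1$ always lands on a valid node and the back-translation to $v''$ is well defined. Verifying this correspondence in both panels --- and checking that the procedure degrades gracefully when $\ell(v)$, $\ell(v')$, or $W$ itself fails to be a maximal repeat, so that the chosen lowest-marked-ancestor and bijection steps still pick out the correct $v''$ --- is the crux of the argument; once it is settled, every step runs in constant time and the theorem follows.
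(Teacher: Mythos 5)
Your architecture is essentially the paper's: compute the locus $v$ of $aW$ by an LCA on $\ST$, suffix-link to $v'$, and use a level-ancestor query on the topology of $\REV{\SLT}$ (whose depth equals string length) to locate $W$. In fact, the lowest-marked-ancestor route you take is exactly the variant the paper itself proposes, in the paragraph immediately after the theorem, as a ``practical implementation'' alternative to comparing $\SL(u)$ with $\mathtt{parent}(v')$. So the approach is sound. However, the two points you flag as ``the crux of the argument'' are precisely where the proof lives, and you leave them open, so as written this is a genuine gap rather than a complete proof.

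Concretely, two things are missing. First, after the level-ancestor query at depth $|W|$ you land on a node $x'$ of $\REV{\SLT}$ that may be a subdivision node, i.e.\ not a maximal repeat; the rank/select bijection between the marked bitvectors of $\ST$ and $\REV{\SLT}$ is defined only on marked (maximal-repeat) nodes, so ``translate the answer back via the same bijection'' is not yet well defined at $x'$. The paper's fix is to first descend from $x'$ to its first branching descendant $y'$ (via $\mathtt{leftmostLeaf}$, $\mathtt{rightmostLeaf}$ and $\mathtt{lca}$ on $\REV{\SLT}$) and only then commute to $\ST$. The lemma that makes this correct --- and which your proposal never states --- is that whenever the locus $y$ of $W$ in $\ST$ is not $v'$ itself, $\ell(y)$ is a maximal repeat: $\ell(y)$ is a prefix of $\ell(z_s)$, which is left-maximal, and left-maximality is closed under taking prefixes, so $\ell(y)$ is left-maximal; being the label of an internal node of $\ST$ it is also right-maximal. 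Hence $y$ is a branching node of $\REV{\SLT}$, it has $W$ as a prefix, and by minimality of the locus it coincides with the first branching descendant of $x'$. Second, your procedure must explicitly detect the case $|W|>|\ell(z_s)|$ (equivalently, in the paper's formulation, the case where $(\SL(u),v')$ is a single $\ST$ edge): then the level-ancestor query at depth $|W|$ is not even addressable from $z_s'$, and the correct answer is simply $v'$. Adding these two steps and the supporting lemma closes the gap; the remaining bookkeeping (computing $\INTERVALFUNCTION(\REV{W},\REV{T})$ by comparing $w$ with $\mathtt{parent}(w)$, preferably by frequency rather than by string depth since the latter is only directly available at maximal repeats) is as you describe.
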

\begin{proof}
We use the data structures described in \cite{BCKM13}, augmented with the topology of $\SLT$ and with a bitvector to commute between the topology of $\ST$ and the topology of $\SLT$ (see~\cite{belazzougui2017framework} for details on commuting). Such data structures take $O(n\log\sigma)$ bits of space, and they can be built in randomized $O(n)$ time using the algorithms in \cite{belazzougui2014linear,belazzougui2016linear}. They support operations $\mathtt{extendRight}(\mathtt{id}(W),a)=\mathtt{id}(Wa)$ and $\mathtt{extendLeft}(\mathtt{id}(W),a)=\mathtt{id}(aW)$, where $\mathtt{id}(W)=(\mathbb{I}(W,T),\mathbb{I}(\REV{W},\REV{T}))$. We additionally assume the knowledge of $|W|$, i.e. $\mathtt{id}(W)=(\mathbb{I}(W,T),\mathbb{I}(\REV{W},\REV{T}),|W|)$. We only show how to support 
$\mathtt{contractLeft}(\mathtt{id}(aW))=\mathtt{id}(W)$, since supporting $\mathtt{contractRight}(\mathtt{id}(Wa))=\mathtt{id}(W)$ is symmetric. Since \cite{BCKM13} already supports $\mathtt{contractLeft}(\mathtt{id}(aW))$ for right-maximal substrings, we assume for now that $aW$ is not right-maximal. Note that we can decide whether $aW$ is right-maximal or not by using $\mathbb{I}(\REV{aW},\REV{T})$, and, if $W$ is right-maximal, we can just use the contraction algorithm described above. Let $v$ be the locus of $aW$ in $\ST$: this can be computed from $\mathbb{I}(aW,T)$ using $\mathtt{lca}$ queries on $\ST$. Since $aW$ is not right maximal, $aW \neq \ell(v)$ and $aW$ ends in the middle of edge $(u,v)$ of $\ST$. We take in constant time the suffix link $(u,u')$ from $u$ and the suffix link $(v,v')$ from $v$, and we decide whether $(u',v')$ is an edge or a path of $\ST$ by comparing $u'$ to $\mathtt{parent}(v')$, which can be computed in constant time. If $(u',v')$ is an edge of $\ST$ (Figure~\ref{fig:contract}c), then $v'$ is the locus of $W$ and we compute $\mathbb{I}(\ell(v'),T)$ in constant time. Otherwise (Figure~\ref{fig:contract}d), we compute in constant time $z=\mathtt{parent}(v')$: this node is a maximal repeat by Property~\ref{property:maxRepeats}, since it is an internal node of $\ST$ with an implicit Weiner link whose destination falls inside $(u,v)$. We use the data structures in Section~\ref{sec:stringIndexes} to measure the length of $\ell(z)$ in constant time. If $|W|>|\ell(z)|$, the locus of $W$ is again $v'$. Otherwise, since $z$ is a maximal repeat, we move in constant time to the node $z'$ of $\REV{\SLT}$ that corresponds to $\ell(z)$, we issue a constant-time level ancestor query from $z'$ on $\REV{\SLT}$ with length $|W|$, and, from the destination $x'$ of such a level ancestor query, we move in constant time to the first branching descendant $y'$ of $x'$, by using $\mathtt{leftmostLeaf}$, $\mathtt{rightmostLeaf}$, and $\mathtt{lca}$ queries on $\REV{\SLT}$. Finally, we move in constant time to the node $y$ of $\ST$ that corresponds to $y'$, and we compute $\mathbb{I}(\ell(y),T)$ in constant time. We compute $\mathbb{I}(\REV{W},\REV{\ST})$ as described at the beginning of Section \ref{sec:contract}.
\end{proof}

Note that the algorithm in Theorem~\ref{thm:contract} works even when $aW$ is right-maximal; moreover, if the information on whether $aW$ is right maximal or not is given in input, the algorithm can decide whether $W$ is right maximal or not. In a practical implementation, once we have taken the suffix link $(v,v')$ from $v$, we could check whether $v'$ is a maximal repeat, and in the positive case we could immediately commute to $\REV{\SLT}$ and issue level ancestor queries. If $v'$ is not a maximal repeat, we could move in constant time to the lowest ancestor $v''$ of $v'$ that is a maximal repeat, using a lowest marked ancestor data structure on $\ST$, we could measure $|\ell(v'')|$, and if $|\ell(v'')| \geq |W|$, we could again issue level ancestor queries in $\REV{\SLT}$ (otherwise, the locus of $W$ is again $v'$).

A bidirectional index on $T$ that supports extension and contraction in constant time, can be used to implement in linear time several applications that slide a window $S[i..j]$ of fixed length over a query string $S$, and that compute the frequency of every $S[i..j]$ in $T$, \emph{without the size of the window being known during construction}\footnote{If the size $k$ of the window is fixed and known during construction, most such applications do not need the contract operation, and can be made to work using just one BWT and a bitvector of length $|T|$ that marks the boundaries of $k$-mer intervals in the BWT.}. For example, measuring the frequency of windows of fixed length for read correction \cite{philippe2013crac}, computing the inner product between the $k$-mer composition vectors of $S$ and $T$ (a step in $k$-mer kernels), estimating the probability of $S$ according to a fixed-order Markov model trained on $T$, and checking whether $S$ is a path in the de~Bruijn graph of $T$. Our index enables also applications in which \emph{the sliding window needs to be extended or contracted during the scan}, like variable-order and interpolated Markov models (see \cite{cunial2018framework} for an overview). A fully-functional bidirectional index is not needed for computing the matching statistics array between $S$ and $T$, in linear time and in $O(|T|\log\sigma)$ bits of space, since one can use the algorithms in \cite{belazzougui2014indexed} on top of the data structures in \cite{belazzougui2014linear}. However, achieving such bounds with our bidirectional index becomes trivial.

In practical applications of matching statistics, one typically needs to maintain the intervals in both $\BWT$ and $\REV{\BWT}$ just after every successful right extension, and, when the current match $S[i..j]$ cannot be extended with $S[j+1]$ in $T$ any longer, one might need both BWT intervals just for the proper suffixes $S[k..j]$ such that $\Sigma^{r}_{T}(S[i..j]) \subset \Sigma^{r}_{T}(S[k..j])$, i.e. just for the suffixes of $S[i..j]$ from which a right-extension with $S[j+1]$ is attempted again. Every such suffix is a maximal repeat ancestor of $\REV{S[i..j]}$ in $\REV{\ST}$ \cite{belazzougui2018fast}, thus, once we reach the locus of such a suffix in $\REV{\ST}$ with $\mathtt{parent}$ operations, we can compute its interval in $\REV{\BWT}$, we can measure its string length $p$, and we can compute its interval in $\BWT$ by issuing $\MS[i]-p$ contract operations from the locus of $S[i..j]$ in $\ST$, but without updating the interval in $\REV{\BWT}$ after each contraction. Even more aggressively, we can just issue $\MS[i]-p$ suffix links from the locus of $S[i..j]$ in $\ST$. Note that such a locus might correspond to the right-maximal string $S[i..j] \cdot V$ for some nonempty $V$, thus taking $\MS[i]-p$ suffix links might lead to a node of $\ST$ that corresponds to the right-maximal string $S[k..j] \cdot V$: thus, we need to move in constant time from such a node, to its lowest ancestor in $\ST$ that is a maximal repeat; from there, we can then issue a level ancestor query with value $p$. Such a lazy synchronization might be faster than issuing $\MS[i]-p$ full contract operations in practice. 
%
%
%

Our index can be seen as a representation of a de~Bruijn graph that supports bidirectional navigation, that allows access to the frequency of every $k$-mer and $(k+1)$-mer, and that has no upper bound on the order: we call \emph{infinite-order} such a de~Bruijn graph. Note that, for a given order $k$, we can support both the variant in which arcs must occur in $T$ (calling $\mathtt{extendRight}$ and then $\mathtt{contractLeft}$ to implement $\mathtt{arc}$ and $\mathtt{followArc}$), and the variant in which arcs do not have to occur in $T$ (calling $\mathtt{contractLeft}$ and then $\mathtt{extendRight}$). Membership queries reduce to backward searches, and we can move from a higher to a lower order using the same algorithm as in matching statistics. Indeed, one typically wants to switch to a suffix of the current $k$-mer whenever there is only one arc in the graph of the current order, and this arc is labelled with the terminator character \cite{diaz2018assembling}; or, more generally, whenever one needs to increase the number of outgoing arcs from the current $k$-mer (for example because the existing ones have already been explored \cite{morisse2018hybrid}), or to increase the frequency of the current right-maximal $k$-mer. In all such cases, one wants to switch to the largest order with the desired property, and the corresponding suffix is always a maximal repeat (for example, the longest suffix, of the current right-maximal $k$-mer, that has strictly greater frequency, is a maximal repeat). Symmetrically, when increasing the order, one may want to switch e.g. from the current $k$-mer $W$ that is left-maximal but not right-maximal, to the maximal repeat $WV$ with shortest $V$. Clearly $\mathbb{I}(WV,T)=\mathbb{I}(W,T)$, we know $|V|$ since we can access $|WV|$, and we can compute $\mathbb{I}(\REV{WV},\REV{T})$ by taking $|V|$ Weiner links from $\mathbb{I}(\REV{W},\REV{T})$. All such Weiner links are explicit, and in practice we can just update the first position of the interval at every step. 
%

In the next section, we describe a representation of an infinite-order de~Bruijn graph in which the time to decrease or increase the order does not depend on the difference between the source and the destination order.

\section{Implementing de Bruijn graphs with CDAWGs} \label{sec:dbgCDAWG}

An \emph{affix link} $\mathbb{A}(w)$ is a map from a node $w$ of $\ST$, to the locus of $\REV{\ell(w)}$ in $\REV{\ST}$ (we use $\REV{\mathbb{A}}(w)$ to denote the symmetrical map from a node $w$ of $\REV{\ST}$, to the locus of $\REV{\ell(w)}$ in $\ST$) \cite{stoye2000affix,strothmann2007affix}. We use $\mathbb{A}(W)$ as a shorthand for $\mathbb{A}(w)$ where $w$ is the locus of $W$. In asynchronous bidirectional indexes, affix links are used to switch direction when the user desires \cite{strothmann2007affix}. In this section we are more interested in their ability to extend a non-maximal repeat in a bidirectional index: for example, if $W$ is right-maximal but not left-maximal, and if it has loci $(v,w)$ in $\ST$ and $\REV{\ST}$, respectively, then its shortest left-maximal extension $VW$ with $|V| \geq 0$, i.e. the shortest maximal repeat that contains $W$ as a (not necessarily proper) suffix, has loci $(\REV{\mathbb{A}}(w),w)$; and if $W$ is neither left- nor right-maximal, then the shortest maximal repeat $UWV$ with the same frequency as $W$ has loci $(\REV{\mathbb{A}}(\mathbb{A}(v)),\mathbb{A}(v)) = (\REV{\mathbb{A}}(w),\mathbb{A}(\REV{\mathbb{A}}(w)))$ \cite{strothmann2007affix}. 
%
%
%
Thus, in what follows we ignore affix links from leaves.

Rather than storing $\mathbb{A}(w)$ for every internal node $w$ of $\ST$, it has been proposed to sample $\mathbb{A}(w)$ every $p$ suffix links \cite{canovas2017full}: indeed, $\mathbb{A}(w)$ is either $v=\mathbb{A}(\mathtt{suffixLink}(w))$, if $|\ell(v)| \geq |\ell(w)|$, or it is the child of $v$ obtained by following the first character of $\ell(w)$ \cite{strothmann2007affix}. 
This allows one to compute $\mathbb{A}(w)$ in $O(p)$ time, paying $O((|T|/p)\log{n})$ bits of space. We briefly observe that, compared to existing sampling schemes for bidirectional indexes, we can further reduce space to $O((|T|/p)\log{m})$ bits, where $m$ is the number of maximal repeats of $T$, since, by Property~\ref{property:affixLinks}, $\mathbb{A}(v)$ is a maximal repeat of $T$ for every internal node $v$ of $\ST_T$. In practice following Weiner links is faster than following suffix links: thus, one could sample the value of $\mathbb{A}(w)$ for every maximal repeat, and then sample every $p$ characters inside an edge of $\REV{\ST}$ that connects two maximal repeats, i.e. every $p$ explicit Weiner links. If $\mathbb{A}(w)$ is not sampled, then $\ell(w)$ is not left-maximal, so we take the only possible Weiner link from it and we repeat the search from there, returning the value of the first sampled node we find. This sampling scheme takes $O((m+(|T|-m)/p)\log{m})$ bits of space. One could even waive sampling the nodes of $\ST$ that are not maximal repeats, but to retrieve their value one would have to pay a number of Weiner links that is at most equal to the length of the longest edge of $\REV{\ST}$ connecting two maximal repeats. Clearly, sampling just maximal repeats works also for the scheme based on suffix links. 
%
%

In this section we store $\mathbb{A}(w)$ and $\REV{\mathbb{A}}(w)$ explicitly, but just for maximal repeats, together with $\CDAWG_T$ and $\REV{\CDAWG}_T$, to implement an infinite-order de~Bruijn graph in which the time to increase or decrease the order does not depend on the difference between the source and the destination order:

\begin{theorem} \label{thm:cdawg}
Given a string $T$, there are a fully-functional bidirectional index, and an infinite-order representation of the de Bruijn graph of $T$, that take space proportional to the number of left and right extensions of the maximal repeats of $T$, and that support all queries in $O(\log{\log{|T|}})$ time.
\end{theorem}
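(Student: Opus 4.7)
My plan is to build on the CDAWG-based suffix-tree representation of \cite{belazzougui2017representing}, augmented with $\RLBWT_T$ and $\RLBWT_{\REV{T}}$, which already takes $O(e+\REV{e})$ words and supports $\PARENT$, $\SL$, and $\WEINER$ in $O(\log\log|T|)$ time. The bidirectional descriptor is the one in Section~\ref{sec:stringIndexes}, namely $\DESCRIPTOR(W)=(\INTERVALFUNCTION(W,T),\INTERVALFUNCTION(\REV{W},\REV{T}),|W|)$. I would implement $\ExtendRight$ and $\ExtendLeft$ as a single backward-search step on $\RLBWT_{\REV{T}}$ (respectively, on $\RLBWT_T$), followed by a constant number of rank queries on the other $\RLBWT$ to synchronize both intervals through the standard bidirectional BWT update rule; each costs $O(\log\log|T|)$.

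For $\ContractLeft$ and $\ContractRight$ I would port the algorithm of Theorem~\ref{thm:contract} to the CDAWG interface. That algorithm uses only $\lca$, $\PARENT$, and $\SL$ on $\ST$, all already $O(\log\log|T|)$ in the CDAWG representation, plus a level-ancestor query by string depth on $\REV{\SLT}$ followed by a ``first branching descendant'' step. The difficulty is that $\REV{\SLT}$ has $\Theta(|T|)$ nodes and cannot be stored within $O(e+\REV{e})$ words. I would bypass this by using the fact recalled in the preliminaries, namely that $\REV{\SLT}$ is a subdivision of the compact trie of maximal repeats of $\ST$, and that all its branching nodes coincide with maximal repeats (Property~\ref{property:maxRepeats}). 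In Theorem~\ref{thm:contract} the level ancestor always starts from a maximal repeat $z$, and its ``first branching descendant'' is again a maximal repeat, so the query may be replaced by a weighted-ancestor query by string depth on the compact trie of maximal repeats, whose topology fits in $O(m)\subseteq O(e+\REV{e})$ words and which I would equip with a $y$-fast-trie-based weighted-ancestor structure answering queries in $O(\log\log|T|)$ time. Symmetric structures on $\REV{\ST}$ handle $\ContractRight$, and affix links between maximal repeats, stored explicitly in $O(m)$ words by Property~\ref{property:affixLinks}, let us switch direction after a contraction at no extra cost.

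On top of this bidirectional index, the infinite-order de~Bruijn graph operations are direct. Membership reduces to repeated $\ExtendRight$ calls; $\mathtt{arc}$ and $\mathtt{followArc}$ are an $\ExtendRight$ followed by a $\ContractLeft$; $\mathtt{arcLabels}$ and $\mathtt{degree}$ are read from the outgoing arcs of the CDAWG node that owns the current locus in $\ST$ (using the $\gamma\mathtt{.char}$ and $\gamma\mathtt{.right}$ fields when the locus lies in the interior of an edge); $\mathtt{freq}$ is the length of the relevant BWT interval. The frequency-based $\mathtt{increaseK}(v)$ reduces to locating $u=\LOCUS(\ell(v))$ in $\ST$, an $\lca$ in the CDAWG index that costs $O(\log\log|T|)$, and returning $u\mathtt{.length}$; the frequency-based $\mathtt{decreaseK}(v)$ is a single $\PARENT$ call on the same locus; the left-extension variant of $\mathtt{decreaseK}$ is one weighted-ancestor query on the compact trie of maximal repeats of $\REV{\ST}$, reached through the stored affix link.

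The main obstacle is the weighted-ancestor machinery on the compact trie of maximal repeats: both its topology must fit in $O(e+\REV{e})$ words, and its edges must be indexed coherently with the $\gamma\mathtt{.right}$ fields already stored on CDAWG arcs, so that a query can land inside an edge (equivalently, inside a unary path of $\REV{\SLT}$) and still return, in $O(\log\log|T|)$ time, the enclosing maximal-repeat pair together with the exact string-depth offset needed to update the descriptor. Once this interface is in place, the remaining steps are a faithful $O(\log\log|T|)$-time transcription of Theorem~\ref{thm:contract} and Section~\ref{sec:contract} to the CDAWG primitive set, yielding both the claimed bidirectional index and the claimed infinite-order de~Bruijn graph within space $O(e+\REV{e})$.
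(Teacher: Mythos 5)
Your proposal follows essentially the same route as the paper's proof: the CDAWG-plus-RLBWT suffix-tree representation of \cite{belazzougui2017representing}, the replacement of level-ancestor queries on $\REV{\SLT}$ by weighted level-ancestor queries on the maximal-repeat subgraph of $\ST$ (the paper cites \cite{amir2007dynamic,farach1996perfect} for the $O(\log\log|T|)$ bound you attribute to a $y$-fast-trie construction), and explicit storage of affix links only at maximal repeats. The one cosmetic difference is the membership query, which the paper answers in $O(|W|)$ time via a dedicated structure from \cite{belazzougui2017fast} rather than via $|W|$ calls to $\ExtendRight$; this does not affect the stated navigation bounds.
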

\begin{proof}
We represent $\ST$ and $\REV{\ST}$ using CDAWGs, as described in \cite{belazzougui2017representing} and summarized in Section \ref{sec:stringIndexes} of this paper. In addition to $\RLBWT$, $\REV{\RLBWT}$, $\CDAWG$ and $\REV{\CDAWG}$, to support Theorem~\ref{thm:contract} we store also a weighted level ancestor data structure on the maximal repeat subgraph of $\ST$ and $\REV{\ST}$, which takes $O(m)$ space and answers queries in $O(\log{\log{|T|}})$ time \cite{amir2007dynamic,farach1996perfect}, and we store $\mathbb{A}$ and $\REV{\mathbb{A}}$ to support changes in the order of the de Bruijn graph ($m$ is the number of maximal repeats of $T$). 
%
We represent an arbitrary substring $W$ of $T$ as a triple $(\mathtt{id}(v),\mathtt{id}(w),|W|)$, where $v$ is the locus of $W$ in $\ST$, $w$ is the locus of $\REV{W}$ in $\REV{\ST}$, and $\mathtt{id}$ is the identifier of a node in the CDAWG-based representation of a suffix tree, i.e. $\mathtt{id}(v)=(v',|\ell(v)|,i,j)$ where $v'$ is a node of a CDAWG and $[i..j]$ is a BWT interval.

To implement $\mathtt{extendRight}(W,c)$, where $Wc$ is assumed to occur in $T$, we first check whether $W$ is right-maximal, by comparing $|W|$ to $|\ell(v)|$: if $W$ is not right-maximal, then the representation of $Wc$ is $(\mathtt{id}(v),\mathtt{weinerLink}(\mathtt{id}(w),c),|W|+1)$. Otherwise, the representation is $(\mathtt{child}(\mathtt{id}(v),c),\mathtt{weinerLink}(\mathtt{id}(w),c),|W|+1)$. If we assume that procedure $\mathtt{extendRight}(W,c)$ can be called with an invalid $c$, we first have to check whether $Wc$ occurs in $T$ using the interval of $\REV{W}$ in $\REV{\BWT}$. To implement $\mathtt{contractLeft}(aW)$, we first check whether $aW$ is right-maximal, by comparing $|aW|$ to $|\ell(v)|$: if so, the representation of $W$ is $(\mathtt{suffixLink}(\mathtt{id}(v)),\mathtt{id}(w'),|W|)$, where $w'$ is either the parent of $w$ or $w$ itself, depending on which one of them has the same frequency as the locus of $W$ in $\ST$. If $aW$ is not right-maximal, we run the algorithm in Theorem~\ref{thm:contract} using the $\mathtt{suffixLink}$ and $\mathtt{parent}$ operations provided by the CDAWG-based representation of $\ST$, and issuing weighted level ancestor queries on the maximal repeat subgraph of $\mathsf{ST}$ rather than level ancestor queries on the topology of $\REV{\mathsf{SLT}}$.

To implement $\mathtt{decreaseK}$ and $\mathtt{increaseK}$ in the de Bruijn graph, we proceed as follows. If the current $k$-mer $W$ is right-maximal, the representation of the longest suffix of $W$ that is a maximal repeat is clearly $(\mathtt{id}(z),\mathtt{id}(\mathbb{A}(z)),|\ell(z)|)$, where $z$ is the maximal repeat reached by taking a suffix link arc from the node of the CDAWG pointed by $\mathtt{id}(v)$. One could further move to a suitable ancestor of such a maximal repeat, by marking the topology of the maximal repeat subgraph of $\ST$. If the current $W$ is left-maximal but not right-maximal, the representation of the shortest maximal repeat of the form $WV$ for some nonempty $V$ is $(\mathtt{id}(z),\mathtt{id}(\mathbb{A}(z)),|\ell(z)|)$, where $z$ is the node of the CDAWG pointed by $\mathtt{id}(v)$. The same holds if $W$ is neither left- nor right-maximal, and if we want to move to the shortest $k$-mer that contains $W$ and is both left- and right-maximal. Implementing the other operations of a bidirectional de Bruijn graph is straightforward and is left to the reader. We use data structures from \cite{belazzougui2017fast} to answer the membership query $\mathtt{node}(W)$ in $O(|W|)$ time.
\end{proof}
%
%

Our construction based on two CDAWGs is reminiscent of the \emph{symmetric compact DAWG} described in \cite{blumer1987complete}, which was used however just for bidirectional extension. 
Theorem~\ref{thm:cdawg} could be simplified in several ways for a practical implementation. For example, as noted already in \cite{blumer1987complete}, since $\CDAWG$ and $\REV{\CDAWG}$ share the same set of nodes, every such node could be stored only once, in which case $\mathbb{A}$ and $\REV{\mathbb{A}}$ would not need to be represented explicitly. If the descriptor of a substring $W$ is $(\mathtt{id}(v),\mathtt{id}(w),|W|)$ with $\mathtt{id}(v)=(v',|\ell(v)|,i,j)$ and $\mathtt{id}(w)=(w',|\ell(w)|,i',j')$, then $v'$ and $w'$ would become pointers to the same node, $|\ell(w)|$ could be derived from $|\ell(v')|-|\ell(v)|+|W|$, and rather than storing $i,j$ and $i',j'$, we could just store $i,i',f(W)$. 
Our representation collapses to the sink of a CDAWG all $k$-mers that occur just once in the dataset, which are likely induced by sequencing errors and are thus not useful for most applications: in this case, we don't even need to store left and right extensions of maximal repeats directed to the sink. If the target application never uses orders smaller than a threshold $\tau$, we could remove from the index all maximal repeats of length smaller than $\tau$ and prune the top part of the corresponding tree data structures, as described in \cite{diaz2018assembling}. We could proceed in a similar way when the user specifies a lower bound on the frequency of $k$-mers (called \emph{solid}, see e.g. \cite{li2016megahit,morisse2018hybrid}). 
%
%

\section{Discussion and extensions}

Our CDAWG-based representation of the de Bruijn graph might be practical: a full experimental study and a careful implementation of each primitive would be an interesting research direction. Given a node $v$ in the de Bruijn graph, it would also be interesting to know if we can traverse an entire maximal non-branching path, i.e. a path in which no $k$-mer except for $v$ and the destination has more than one arc to the left and to the right, without taking time proportional to the length of such a path: this would provide a fast implementation of the \emph{compacted} de Bruijn graph (see e.g. \cite{chikhi2016compacting,minkin2016twopaco} and references therein). 
It is natural to wonder whether one can support the operations of an infinite-order de~Bruijn graph in less space than our indexes. Another open question is whether the CDAWG can be used as a substrate for implementing the \emph{string graph} as well, and whether we can design a single compact index, as wished by \cite{diaz2019simulating}, that supports both the primitives of a string graph and of an infinite-order de Bruijn graph efficiently, allowing the user to take advantage of both approaches in genome assembly.

\section{Acknowledgements}

We thank Martin Bundgaard for motivating the contract operation, Rodrigo Canovas for discussions about bidirectional indexes, Gene Myers for discussions about PacBio CCS reads, and German Tischler for help with $k$-mer counting.

\bibliographystyle{plain}
\bibliography{bbwt}
\end{document}